\documentclass[11pt]{amsart}
\usepackage[letterpaper, margin=1in]{geometry}

\usepackage[colorlinks,
            linkcolor=blue,
            anchorcolor=blue,
            citecolor=blue]{hyperref}
            
\usepackage[utf8]{inputenc}
\usepackage{amsmath, amsthm, amssymb, color, mathbbol}
\makeatletter
\makeatother
\usepackage[shortlabels]{enumitem}
\usepackage{graphicx}
\usepackage{makecell}
\usepackage[ruled]{algorithm2e}
\usepackage{multicol}
\usepackage[table,xcdraw]{xcolor}
\usepackage{array,amscd}
\usepackage{amsfonts}
\usepackage[T1]{fontenc} 
\usepackage{appendix}
\usepackage[arrow, matrix, curve]{xy}
\usepackage[square,comma,sort,numbers]{natbib}
\usepackage{booktabs}
\usepackage{siunitx}
\usepackage{multirow}
\usepackage[font=small,labelfont=bf]{caption}
\usepackage{subfig}
\usepackage{graphicx, array,amscd}
\usepackage{amsfonts}
\usepackage{bbm}
\usepackage[T1]{fontenc}
\usepackage[square,comma,sort,numbers]{natbib}
\usepackage{booktabs}

\usepackage{blkarray}


\def\ci{\perp\kern-1.3ex\perp}
\def\nci{\not\kern-0.3ex\ci}

\usepackage{tikz}
\usetikzlibrary{snakes}
\usetikzlibrary{shapes}
\usetikzlibrary{arrows}
\usetikzlibrary{positioning}
\usetikzlibrary{calc}
\usetikzlibrary{shapes.geometric}

\setenumerate{leftmargin=*}
\allowdisplaybreaks[3]	

\newtheorem{theorem}{Theorem}
\newtheorem*{theorem*}{Theorem}

\newtheorem{prop}[theorem]{Proposition}

\theoremstyle{definition}
\newtheorem{defn}[theorem]{Definition}
\newtheorem{ex}[theorem]{Example}

\numberwithin{theorem}{section}

\newcommand{\zz}{\mathbb{Z}}

\newcommand{\rr}{\mathbb{R}}
\newcommand{\cc}{\mathbb{C}}


\newcommand{\bfe}{\mathbf{e}}

\newcommand{\bfx}{\mathbf{x}}
\newcommand{\bfy}{\mathbf{y}}


\newcommand{\cm}{\mathcal{M}}

\newcommand{\ii}{\mathcal{I}}


\newcommand{\cp}{\mathfrak{P}}


\newcommand{\ind}{\mbox{$\perp \kern-5.5pt \perp$}}

\newcommand{\lca}{\mathrm{mrca}}
\newcommand{\ones}{\mathbf{1}}
\renewcommand{\top}{\mathrm{top}}
\renewcommand{\int}{\mathrm{Int}}


\usepackage{xcolor}
\definecolor{janepurple}{RGB}{180, 0, 240}
\definecolor{benred}{RGB}{240, 0, 0}

\title[Identifiability of the Rooted Tree Parameters in the CFN-MC Model]{Identifiability of the Rooted Tree Parameter under the Cavender-Farris-Neyman Model with a Molecular Clock}
\author{Jane Ivy Coons and Benjamin Hollering}

\begin{document}
\begin{abstract}
Identifiability of the 
discrete tree parameter is a key property for phylogenetic models since it is necessary for statistically consistent estimation of the tree from sequence data. Algebraic methods have proven to be very effective at showing that tree and network parameters of phylogenetic models are identifiable, especially when the underlying models are group-based. However, since group-based models are time-reversible, only the unrooted tree topology is identifiable and the location of the root is not. In this note we show that the rooted tree parameter of the Cavender-Farris-Neyman Model with a Molecular Clock is generically identifiable by using the invariants of the model which were characterized by Coons and Sullivant. 
\end{abstract}

\maketitle
\section{Introduction and Preliminaries}
The parameters of a statistical model are identifiable if they are uniquely determined by the probability distribution they produce. Identifiability is a key property since it is needed to do statistically consistent inference. In phylogenetics, the identifiability of the tree (or network) parameter is often of particular interest since it is needed to reconstruct evolutionary histories from gene sequence data.  

The identifiability of the tree parameter in many simple phylogenetic models has already been established \cite{chang1996full, steel1998} and many recent results have been obtained for more complicated models including mixtures \cite{allman2010identifiability, hollering2021identifiability, rhodes2012identifiability, long2015identifiability}, networks, \cite{ardiyansyah2021distinguishing, gross2018distinguishing, gross2020distinguishing}, and coalescent models \cite{allman2019species, allman2019nanuq, allman2022identifiability, solis2016inferring,yu2015maximum,zhang2018bayesian}. 

In many of the above works, algebraic methods have been used to show that the tree or network parameters are identifiable. Algebraic techniques are particularly effective for group-based models such as the Cavender-Farris-Neyamn (CFN), Jukes-Cantor, Kimura 2-parameter, and Kimura 3-parameter models. This is because the discrete Fourier transform can be used to simplify the parameterization of group-based models \cite{hendy1996complete, evans1993invariants}.  After applying this linear change of coordinates, many group-based models become \emph{toric varieties} \cite{sturmfels2005toric} and thus the \emph{invariants} of the model become much easier to study and characterize. While the standard group-based models are amenable to algebraic methods, they are also \emph{time-reversible}. This means that the unrooted tree parameter is identifiable, but the location of the root is not identifiable under these models \cite{felsenstein1981evolutionary}. 

In this note we study the identifiability of the rooted tree parameter under the Cavender-Farris-Neyman model with a molecular clock which we call the CFN-MC model after \cite{coons2021toric}. This model is still group-based but the molecular clock condition requires that the time elapsed between any leaf and the root is the same; that is, it restricts the model to ultrametric trees and removes the time-reversibility. The molecular clock is an especially important assumption for inferring phylogenetic histories along shorter time-scales, such as in the cases of bacterial and viral evolution.  The invariants of this model along with the combinatorial structure of the associated polytope were completely characterized in \cite{coons2021toric}. We use these invariants here to obtain the following result, which confirms the biological intuition that the molecular clock allows the location of the root to be identified:

\begin{theorem}
\label{thm:Main}
The rooted tree parameter of the Cavender-Farris-Neyman model with a Molecular Clock is generically identifiable for all trees with the same fixed number of leaves. 
\end{theorem}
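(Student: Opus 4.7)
The approach is the standard algebraic one for tree identifiability: for every pair of distinct rooted trees $T_1, T_2$ on a common leaf set, I would exhibit a polynomial in the probability coordinates that vanishes on the CFN-MC model variety $V_{T_1}$ but not on $V_{T_2}$. This yields $V_{T_1} \ne V_{T_2}$ and hence generic identifiability of the rooted tree. Since the CFN-MC model is a submodel of CFN and the unrooted tree parameter of CFN is identifiable \cite{chang1996full, steel1998}, I would reduce immediately to the case where $T_1$ and $T_2$ share an unrooted topology $T$ but have distinct roots.

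After the discrete Fourier transform, the CFN-MC variety is toric and its invariants are characterized in \cite{coons2021toric}. Beyond the usual CFN binomials on $T$, the ideal contains a distinguished family of molecular-clock binomials encoding, at every internal vertex $v$ of the rooted tree, equality of the two subtree heights below $v$: at a cherry this gives a binomial equating two Fourier edge-parameters, and at deeper vertices it gives binomials equating products of edge-parameters along sister root-to-leaf paths. Which pairs of subtrees are sisters is determined by the rooting, so distinct rootings of $T$ produce distinct MC binomials.

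The main obstacle is to show that, for any two distinct rootings of $T$, some MC binomial invariant of $V_{T_1}$ fails to vanish on $V_{T_2}$. I would locate a clade $C$ of $T_1$ that is not a clade of $T_2$, which must exist because two rooted trees on the same leaves agree if and only if their clade sets coincide. At the vertex of $T_1$ whose descendant leaf set is $C$, I would take the corresponding MC sister-subtree height binomial from \cite{coons2021toric} and substitute into the monomial parameterization of $V_{T_2}$; since $C$ is not a clade of $T_2$, the two monomials pulled back along the $T_2$-parameterization involve distinct multisets of CFN-MC Fourier edge-parameters on $T_2$, so their difference is a nonzero polynomial. The technical heart of the proof is this nonvanishing verification, which I would handle by tracking the paths in $T_2$ that are traversed by each side of the binomial and exhibiting an edge that appears with unequal multiplicity on the two sides.
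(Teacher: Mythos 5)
Your overall strategy---exhibit an invariant of one model that fails to vanish on the other, using the toric description from \cite{coons2021toric}---is the right one, but the proposal has three genuine gaps. First, the ``immediate'' reduction to a common unrooted topology is not valid as stated: identifiability of the unrooted CFN tree produces a polynomial separating the two \emph{full} CFN varieties, but the CFN-MC models are proper subvarieties of much smaller dimension, and a polynomial that does not vanish identically on the CFN variety of $T^2$ may perfectly well vanish on its CFN-MC subvariety. So the case of distinct unrooted topologies is not actually disposed of. The paper needs no such case split: two distinct rooted trees on the same leaf set must differ on some rooted triple $\{i,j,k\}$, and that single observation handles both cases uniformly. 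Second, the ``MC sister-subtree height binomials'' you invoke are relations among the Fourier \emph{edge parameters}, not polynomials in the coordinates $q_{\bfx}$; invariants must live in the ring $R$. The invariants that actually do the work are the linear binomials $q_{\ones_{i,k}} - q_{\ones_{j,\ell}}$, which lie in $I_T$ exactly when $\lca(i,k) = \lca(j,\ell)$, since $\phi_T(q_{\ones_{i,k}}) = b_0 b_{\lca(i,k)}$. Taking a rooted triple on which $T^1$ has cherry $\{i,j\}$ while $T^2$ has cherry $\{i,k\}$, the binomial $q_{\ones_{i,k}} - q_{\ones_{j,k}}$ lies in $\ii(\cm_{T^1})$ and visibly not in $\ii(\cm_{T^2})$; the ``technical heart'' you defer (tracking edge multiplicities along path systems for a clade-indexed binomial) collapses to a one-line comparison of two degree-one monomials.

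Third, concluding generic identifiability from $V_{T_1} \neq V_{T_2}$ alone is insufficient under Definition \ref{def:genID}: if one variety contained the other, they would be unequal yet their intersection would be full-dimensional in one of them. You need either separating polynomials in both directions or, as the paper does, the fact that every $\cm_T$ has the same dimension $n-1$ (Proposition \ref{prop:Dimension}), after which a single separating polynomial suffices by the second statement of Proposition \ref{prop:idealID}.
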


It is well known to biologists that the molecular clock assumption allows for the identification of the root \cite{yang2012molecular}. In this note we give a simple algebraic proof of this fact. 

The remainder of this note is organized as follows. In Section \ref{sec:CFNMC} and \ref{sec:GenID} we provide some background on the CFN-MC model and the characterization of the invariants given in \cite{coons2021toric}. In Section \ref{sec:GenID} we give a brief overview of generic identifiability and algebraic tools which can be used to prove such results. In Section \ref{sec:RootedTreeID}, we use the invariants to prove our main result. 

\subsection{The Cavender-Farris-Neyman model with a Molecular Clock}
\label{sec:CFNMC}

In this section, we review the Cavender-Farris-Neyman model with a molecular clock (CFN-MC model). This model is described in detail in Section 3 of \cite{coons2021toric}, and we recount only the aspects that are relevant to this note.
Let $T$ be an $n$-leaf rooted, binary tree with labeled edge lengths. For each edge $e$ in $T$, let $t_e > 0$ denote its length. Let $a(e)$ and $d(e)$ be the two vertices of $e$ so that $d(e)$ is a descendent of $a(e)$; that is, $a(e)$ is the vertex of $e$ that is closest to the root.

The Cavender-Farris-Neyman model (or CFN model) is a  point substitution model that describes mutations from purines (adenine and guanine) to pyrimidines (thymine and cytosine) and vice versa at a single aligned site in the genomes of several taxa. We bijectively identify the set of two states with $\zz_2$. The CFN model arises as a two-state continuous time Markov process along the fixed tree $T$.
The rate matrix of this Markov process has the form
\[
Q = \begin{bmatrix}
- \alpha & \alpha \\
\alpha & - \alpha \\
\end{bmatrix}
\]
where $\alpha > 0$ is a parameter that represents the rate of transition from purine to pyrimidine and vice versa. To obtain a transition matrix $M^e$ for an edge $e$ in $T$, we exponentiate the matrix $Qt_e$; that is,
\[
M^e := \exp(Q t_e) = \begin{bmatrix}
    (1+e^{-2\alpha t_e})/2 & (1-e^{-2\alpha t_e})/2 \\
    (1-e^{-2\alpha t_e})/2 & (1+e^{-2\alpha t_e})/2 \\
\end{bmatrix}.
\]
The entries of this matrix are conditional probabilities; in particular, the $i,j$ entry of $M^e$ is the probability the vertex $d(e)$ has state $j$ given that $a(e)$ has state $i$. Let $m^e_0$ denote the diagonal entry of $M^e$ and let $m^e_1$ denote the off-diagonal entry.

Typically the gene sequences of the taxa at the internal nodes of the tree are unknown. So the CFN model is a hidden variable graphical model in which the states at the leaves of the tree are observed and the states at the internal nodes are not. In order to compute the probability of observing a given $0/1$ sequence $\bfx$ at the leaves of the tree, we must marginalize over all possible labelings of the internal nodes of the tree with the elements of $\zz_2$. We assume a uniform distribution of states at the root of $T$. So the probability of observing the states $\bfx \in \zz_2^n$ at the leaves of the tree under the CFN model is
\begin{equation}\label{eq:Probability}
p_T(\bfx) = \frac{1}{2} \sum_{(x_{n+1},\dots, x_{2n-1})\in \zz_2^{n-1}} \prod_{e \in E(T)} M^e(x_{a(e)},x_{d(e)}).
\end{equation}
The \emph{CFN model} is the set of all probability distributions in $\Delta_{2^n-1}$ of this form. Note that $p_T(\bfx)$ is a polynomial in the entries $m^e_0$ and $m^e_1$ of $M^e$. Thus the CFN model is the intersection of an algebraic variety with the probability simplex and we can study it through the lens of algebraic geometry.

The \emph{molecular clock} is the notion in evolutionary biology that the rate of mutation at a fixed site in the genome is roughly the same between different evolutionary lineages over the same period of time. On the mathematical level, the molecular clock condition imposes that all the labeled tree parameters for the model must be \emph{ultrametric}.

\begin{defn}
A tree $T$ with labeled edge lengths $t_e >0$ is \emph{ultrametric} if for any internal node $v$ and any pair of leaves $i$ and $j$ descended from it, the sum of the edge lengths along the path from $v$ to $i$ is equal to that along the path from $v$ to $j$.
\end{defn}

Equivalently, this condition says that the time elapsed along the path from the root to a leaf is the same for all leaves. Ultrametric trees are also often called \emph{equidistant}. 

\begin{defn}
    The \emph{Cavender-Farris-Neyman model with a molecular clock}, or CFN-MC model, is the set of all probability distributions in $\Delta_{2n-1}$ that arise from ultrametric trees in the CFN model.
\end{defn}

The CFN-MC model is a \emph{group-based} with respect to $\zz_2$; that is, we have identified the states of the model with $\zz_2$ in such a way that the entry $M^e(i,j)$ of each transition matrix depends only on the value of $i-j$. We refer the reader to Chapter 15 of \cite{AS18} for an extensive introduction to group-based phylogenetic models. These models have especially useful algebraic properties; in particular, they are toric after a linear change of coordinates known as the \emph{discrete Fourier transform} 
\cite{evans1993invariants, hendy1996complete}. In 
\cite{coons2021toric}, the authors give a parametrization of the toric variety associated to the CFN-MC model on a phylogenetic tree. In order to introduce this parametrization, we require the following definitions.

Let $R = \cc[q_{\bfx} \mid \bfx \in \zz_{2,\mathrm{even}}^n]$ be the polynomial ring in variables $q_{\bfx}$ indexed by length $n$ $0/1$ strings that sum to an even number. Such labelings of the leaves of $T$ are in bijection with systems of disjoint paths that connect the leaves of $T$. In particular, given $\bfx \in \zz_{2,\mathrm{even}}^n$, there is a unique pairing of the leaves $\ell$ of $T$ with $x_\ell = 1$ so that the paths passing between each pair are pairwise disjoint. Let $\cp(\bfx)$ denote this system of disjoint paths. 

Denote by $\int(T)$ the set of $n-1$ internal nodes of $T$. Given a path $P$ between leaves of $T$, we define its \emph{top-most node} to be the node of $P$ that is closest to the root. Finally, for each $\bfx \in \zz_{2,\mathrm{even}}^n$, we define $\top(\bfx)$ to be the set of all top-most nodes of paths in $\cp(\bfx)$.

\begin{ex}
The tree in figure \ref{fig:Tree} satisfies the molecular clock condition if and only if $t_3 = t_4$, $t_5 = t_6$ and $t_1 + t_3 = t_2 + t_5$. Note that these three linear equations imply all others that most hold in order for the tree to be ultrametric. The polynomial ring $R$ is 
\[
\cc[q_{0000}, q_{1100}, q_{1010}, q_{1001}, q_{0110}, q_{0101}, q_{0011}, q_{1111}].
\]
The internal nodes of $T$ are $\int(T) = \{a,b,c\}$, and the top-sets of the even $0/1$ labelings of the leaves are:
\begin{center}
\begin{tabular}{lcl}
    $\top(0000) = \emptyset$ & \qquad \qquad \qquad & $\top(1100) = \{b\}$ \\
    $\top(1010) = \{ a \}$ & \qquad \qquad \qquad & $\top(1001) = \{a \} $ \\
    $\top(0110) = \{ a \}$ & \qquad \qquad \qquad & $\top(0101) = \{a \} $ \\
    $\top(0011) = \{ c \}$ & \qquad \qquad \qquad & $\top(1111) = \{b,c\} $. \\
\end{tabular}
\end{center}

\end{ex}

\begin{figure}
    \centering
    \begin{tikzpicture}
\draw[fill] (0,0) circle [radius = .05];
\node[below] at (0,0) {1};
\draw[fill] (.75,.75) circle [radius = .05];
\draw[fill] (1.5,0) circle [radius = .05];
\node[below] at (1.5,0) {2};
\draw[fill] (2,2) circle [radius = .05];
\draw[fill] (3.25,.75) circle [radius = .05];
\draw[fill] (2.5,0) circle [radius = .05];
\node[below] at (2.5,0) {3};
\draw[fill] (4,0) circle [radius = .05];
\node[below] at (4,0) {4};
\draw (0,0) -- (2,2) -- (4,0);
\draw (1.5,0) -- (.75, .75);
\draw (3.25, .75) -- (2.5, 0);
\node[left] at (1.2,1.2) {$t_1$};
\node[right] at (2.8,1.2) {$t_2$};
\node[left] at (.5,.5) {$t_3$};
\node[right] at (1, .5) {$t_4$};
\node[left] at (3,.5) {$t_5$};
\node[right] at (3.5,.5) {$t_6$};
\node[left] at (2,2) {$a$};
\node[left] at (.75,.75) {$b$};
\node[right] at (3.25, .75) {$c$};
\end{tikzpicture}
    \caption{A rooted binary tree with four leaves and labeled edge lengths.}
    \label{fig:Tree}
\end{figure}
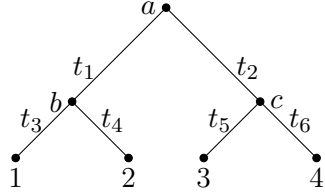

In Section 3 of \cite{coons2021toric}, the authors show that the toric ideal of the CFN-MC model is the kernel of the monomial map,
\begin{align}
    \phi_T : &R  \rightarrow \cc[b_0, b_v \mid v \in \int(T)] \label{eqn:Map}\\
    & q_{\bfx}  \mapsto b_0 \prod_{v \in \top(\bfx)} b_v. \nonumber
\end{align}
We call the kernel of $\phi_T$ the \emph{CFN-MC} ideal and denote it by $I_T$. The polynomials in $I_T$ are the \emph{phylogenetic invariants} of the model.
Note a binomial
\[
\prod_{i=1}^d q_{\bfx_i} - \prod_{i=1}^d q_{\bfy_i}
\]
belongs to $I_T$ if and only if the multiset union, $\Cup_{i=1}^d \top(\bfx_i)$ is equal to $\Cup_{i=1}^d \top(\bfy_i)$. For more information on Markov bases and Gr\"obner bases of toric ideals we refer the reader to \cite{aoki2012markov, sturmfels1996grobner}. 

\begin{ex}
    The toric ideal of phylogenetic invariants $I_T$ for the tree in Figure \ref{fig:Tree} is
    \[
    \langle q_{1010} - q_{1001}, q_{1010} - q_{0110}, q_{1010} - q_{0101}, q_{0000} q_{1111} - q_{1100} q_{0011} \rangle.
    \]
    Note that the linear invariants exactly come from pairs of leaves with the same most recent common ancestor, which we explain further in Proposition \ref{prop:LinearForms}.
\end{ex}



\subsection{Identifiability for Discrete Random Variables}
\label{sec:GenID}
In this section we provide some brief background on generic identifiability of discrete parameters with a particular focus on the identifiability of the tree parameters in phylogenetics. 

Let $\cm_{T}$ denote the CFN-MC model on the rooted tree $T$ after applying the discrete Fourier transform. Let $\mathcal{T}_n$ denote the set of all rooted binary tree topologies on $n$ leaves. Consider the family of models $\{\cm_T\}_{T\in \mathcal{T}_n} = \{\cm_T ~:~ T ~\mathrm{has}~ n ~\mathrm{leaves} \}$ and observe that each model $\cm_T \subseteq \rr^{2^{n-1}}$. In other words, each model lies in the same space. In this setting, the tree parameter $T$ is \emph{globally identifiable} if $\cm_{T^1} \cap \cm_{T^2} = \emptyset$ for every pair of different trees $T^1$ and $T^2$ with $n$ leaves. This will often fail to be true and so the following notion of \emph{generic identifiability} is typically used instead.  
\begin{defn}
\label{def:genID}
Let $\{\cm_T\}_{T \in \mathcal{T}_n}$ be the collection of CFN-MC models for all rooted trees $T$ with $n$ leaves. Then the rooted tree parameter $T$ is \emph{generically identifiable} if for each distinct pair of trees $\{T^1, T^2\}$
\[
\dim(\cm_{T^1} \cap \cm_{T^2}) < \min(\dim(\cm_{T^1}), \dim(\cm_{T^2}))
\]
\end{defn}
This condition is also often called \emph{model distinguishability}. It guarantees that the intersection of $\cm_{T_1} \cap \cm_{T_2}$ is a Lebesgue measure zero subset of each model and thus the probability that a generic data point falls in the intersection is zero. Generic identifiability is also frequently used since algebraic techniques can be applied to study it. The following proposition is a standard algebraic tool for proving generic identifiability results. 

\begin{prop}
\label{prop:idealID}
\cite[Proposition 16.1.12]{AS18}
Let $\cm_1$ and $\cm_2$ be two algebraic models which sit inside the same space. Suppose further that $\ii(\cm_1)$ and $\ii(\cm_2)$ are prime. If there exists polynomials $f_1$ and $f_2$ such that
\[
f_1 \in \mathcal{I}(\cm_1)\setminus \mathcal{I}(\cm_2) ~ \mbox{and}  ~
f_2 \in \mathcal{I}(\cm_2)\setminus \mathcal{I}(\cm_1)
\]
then $\dim(\cm_{1} \cap \cm_{2}) < \min(\dim(\cm_{1}), \dim(\cm_{2}))$.
Furthermore, if $\dim(\cm_1) = \dim(\cm_2)$ and $\ii(\cm_1) \neq \ii(\cm_2)$ then
$\dim(\cm_{1} \cap \cm_{2}) < \min(\dim(\cm_{1}), \dim(\cm_{2}))$. 
\end{prop}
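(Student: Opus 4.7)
The plan is to prove both assertions by translating the ideal-theoretic hypotheses into geometric statements about the varieties $\cm_1 = V(\ii(\cm_1))$ and $\cm_2 = V(\ii(\cm_2))$, and then exploiting the fact that primality of $\ii(\cm_i)$ makes these varieties irreducible. The key geometric tool I will invoke throughout is the standard fact that any proper closed subvariety of an irreducible variety has strictly smaller dimension.

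For the first assertion, I would start from the containment $\cm_1 \cap \cm_2 \subseteq V(f_1) \cap \cm_2$, which holds because $f_1 \in \ii(\cm_1)$ forces $\cm_1 \subseteq V(f_1)$. Since by hypothesis $f_1 \notin \ii(\cm_2)$, the polynomial $f_1$ does not vanish identically on $\cm_2$, so $V(f_1) \cap \cm_2$ is a proper closed subset of the irreducible variety $\cm_2$. The dimension bound then reads
\[
\dim(\cm_1 \cap \cm_2) \leq \dim(V(f_1) \cap \cm_2) < \dim(\cm_2).
\]
Running the symmetric argument with $f_2$ in place of $f_1$ and the roles of $\cm_1, \cm_2$ reversed yields $\dim(\cm_1 \cap \cm_2) < \dim(\cm_1)$, and together these give the strict inequality against $\min(\dim(\cm_1), \dim(\cm_2))$.

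For the ``furthermore'' clause, I would reduce to the first assertion by manufacturing the separating polynomials from the hypotheses. The crucial step is to show that neither prime ideal is contained in the other. If one had $\ii(\cm_1) \subseteq \ii(\cm_2)$, then $\cm_2 \subseteq \cm_1$; but since both varieties are irreducible and have the same dimension, any such inclusion is forced to be equality, contradicting $\ii(\cm_1) \neq \ii(\cm_2)$. The same reasoning rules out $\ii(\cm_2) \subseteq \ii(\cm_1)$. Neither ideal being contained in the other produces witnesses $f_1 \in \ii(\cm_1) \setminus \ii(\cm_2)$ and $f_2 \in \ii(\cm_2) \setminus \ii(\cm_1)$, so the first part applies.

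The only subtle point I anticipate is the distinction between a model viewed as a constructible set arising from a parameterization and its Zariski closure $V(\ii(\cm_i))$. I would handle this by noting that $\dim(\cm_i) = \dim(V(\ii(\cm_i)))$ and that $\cm_1 \cap \cm_2 \subseteq V(\ii(\cm_1)) \cap V(\ii(\cm_2))$, so all dimension comparisons made at the level of varieties transfer back to the models. Beyond this bookkeeping step, the argument is a routine application of the dimension theory of irreducible varieties, and I do not expect a substantial obstacle.
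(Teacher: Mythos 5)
Your argument is correct, and it is the standard proof of this fact: the paper itself gives no proof of Proposition \ref{prop:idealID}, citing it directly from \cite[Proposition 16.1.12]{AS18}, and your reasoning (a polynomial vanishing on one irreducible variety but not the other cuts the intersection down to a proper closed subvariety of each, hence of strictly smaller dimension, with the ``furthermore'' clause reducing to this via the fact that a containment of equidimensional irreducible varieties is an equality) is essentially the argument found in that reference. Your closing remark distinguishing the model from its Zariski closure is exactly the right bookkeeping to make the dimension comparisons legitimate.
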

The second statement means that to distinguish two irreducible models of the same dimension, it is enough to find either $f \in \ii(\cm_1) \setminus \ii(\cm_2)$ or $f \in \ii(\cm_2) \setminus \ii(\cm_1)$. We will see in the next section that this is indeed the case for the CFN-MC model.

\section{Identifiability of the Rooted Tree Parameters}
\label{sec:RootedTreeID}
Before we prove the main result, we introduce a few more pieces of notation. 
 For any pair of leaves $i,j \in [n]$, let $\lca(i,j)$ denote the top-most node in the path between them; in other words, this is their most recent common ancestor. Let $\ones_{i,j}$ denote the $0/1$ vector that has $1$ in its $i$th and $j$th positions and $0$ everywhere else. The following two propositions are necessary for our proof of Theorem \ref{thm:Main} and follow directly from the parametrization of $I_T$ given in Equation (\ref{eqn:Map}).

\begin{prop}\label{prop:LinearForms}
    Let $i,j,k,\ell \in [n]$ be (not necessarily distinct) leaves of $T$. The most recent common ancestors $\lca(i,k)$ and $\lca(j,\ell)$ are equal if and only if the linear binomial $q_{\ones_{i,k}} - q_{\ones_{j,\ell}}$ belongs to $I_T$.
\end{prop}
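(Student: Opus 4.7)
The plan is to unwind the monomial parametrization of $I_T$ given in Equation (\ref{eqn:Map}). The proposition asserts an equivalence between a purely combinatorial condition on $T$ and algebraic membership in the toric ideal, and the equivalence drops out once one identifies the top sets $\top(\ones_{i,k})$ and $\top(\ones_{j,\ell})$ explicitly.

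First I would compute $\top(\ones_{i,k})$ directly from the definition. Since $\ones_{i,k}$ has exactly two nonzero entries (at positions $i$ and $k$, assuming $i \neq k$ so that $\ones_{i,k} \in \zz_{2,\mathrm{even}}^n$), the unique system $\cp(\ones_{i,k})$ of pairwise-disjoint paths among the leaves labeled $1$ consists of the single path from leaf $i$ to leaf $k$. By definition, the top-most node of this path is the node closest to the root on it, which is precisely $\lca(i,k)$. Hence $\top(\ones_{i,k}) = \{\lca(i,k)\}$, and an identical computation yields $\top(\ones_{j,\ell}) = \{\lca(j,\ell)\}$.

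Second, I would apply the monomial map $\phi_T$ from Equation (\ref{eqn:Map}) to obtain $\phi_T(q_{\ones_{i,k}}) = b_0\, b_{\lca(i,k)}$ and $\phi_T(q_{\ones_{j,\ell}}) = b_0\, b_{\lca(j,\ell)}$. Since $I_T = \ker \phi_T$, the linear binomial $q_{\ones_{i,k}} - q_{\ones_{j,\ell}}$ lies in $I_T$ if and only if these two monomials in the $b$-variables coincide. Because the variables $b_v$ for $v \in \int(T)$ are algebraically independent in the target polynomial ring, this equality of monomials holds if and only if $\lca(i,k) = \lca(j,\ell)$, which is precisely the claim. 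Equivalently, one could invoke the multiset-union criterion for binomials in $I_T$ recalled after Equation (\ref{eqn:Map}): the multisets $\{\lca(i,k)\}$ and $\{\lca(j,\ell)\}$ coincide exactly when the two most recent common ancestors agree.

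There is essentially no obstacle here; the only nontrivial point is the explicit identification $\top(\ones_{i,k}) = \{\lca(i,k)\}$, after which the proposition reduces to a comparison of two degree-two monomials. If one permits the degenerate case $i = k$ (or $j = \ell$), the vector $\ones_{i,k}$ fails to lie in $\zz_{2,\mathrm{even}}^n$ unless one interprets $\ones_{i,i}$ as the zero vector, in which case the statement is vacuous; either convention is consistent with the formulation.
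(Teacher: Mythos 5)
Your proof is correct and follows the same route as the paper's: identify $\top(\ones_{i,k})$ as the singleton $\{\lca(i,k)\}$ and then read off membership in $\ker\phi_T$ by comparing the resulting monomials. The paper's version is just a two-sentence compression of your argument, so no further comment is needed.
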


\begin{proof}
    The top-most node in the path from $i$ to $k$ is $\lca(i,k)$ and similarly for the path from $j$ to $\ell$. So the binomial $q_{\ones_{i,k}} - q_{\ones_{j,\ell}}$ lies in the kernel of $\phi_T$ if and only if $\lca(i,k) = \lca(j,\ell)$.
\end{proof}

\begin{prop}\label{prop:Dimension}
For any rooted binary tree $T$ on $n$ leaves, the dimension of $I_T$ is $n-1$.
\end{prop}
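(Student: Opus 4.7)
The plan is to compute $\dim I_T$ directly from the monomial parameterization $\phi_T$ in Equation~(\ref{eqn:Map}), using the standard fact that the dimension of the variety cut out by a toric ideal equals the affine rank of the exponent vectors of its parameterizing monomials (equivalently, the dimension of the image once we impose the normalization $q_{\mathbf{0}}=1$ coming from $\sum_{\bfx} p_{\bfx} = 1$). The key combinatorial input is that a rooted binary tree on $n$ leaves has exactly $n-1$ internal nodes and that every internal node is the most recent common ancestor of some pair of leaves.

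For the upper bound $\dim I_T \le n-1$, I observe that every image $\phi_T(q_{\bfx}) = b_0 \prod_{v \in \top(\bfx)} b_v$ contains $b_0$ with exponent exactly $1$. Hence every exponent vector lies in the affine hyperplane $\{x_{b_0} = 1\}$ inside the $n$-dimensional exponent space with coordinates indexed by $\{b_0\}\cup\int(T)$, and this hyperplane has affine dimension $n-1$.

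For the lower bound $\dim I_T \ge n-1$, I produce $n$ linearly independent exponent vectors (equivalently, $n-1$ affinely independent ones inside $\{x_{b_0}=1\}$). The element $\bfx = \mathbf{0}$ contributes $\phi_T(q_{\mathbf{0}}) = b_0$, with exponent vector $\mathbf{e}_{b_0}$. For each internal node $v\in\int(T)$, since $T$ is binary I pick leaves $i_v, j_v$ from the two distinct subtrees rooted at the children of $v$, so that $\lca(i_v,j_v)=v$ and thus $\top(\ones_{i_v,j_v})=\{v\}$ by Proposition~\ref{prop:LinearForms}. The exponent vector of $\phi_T(q_{\ones_{i_v,j_v}})$ is then $\mathbf{e}_{b_0}+\mathbf{e}_{b_v}$. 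Subtracting $\mathbf{e}_{b_0}$ from each of these recovers the standard basis $\{\mathbf{e}_{b_v}:v\in\int(T)\}$ of the direction space of the hyperplane, establishing linear independence.

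Combining the two bounds yields $\dim I_T = n-1$. The only mild obstacle is pinning down the convention under which the placeholder variable $b_0$, whose job is to enforce the normalization $q_{\mathbf{0}}=1$ rather than to supply a genuine model parameter, is not counted; once that convention is fixed, the proof reduces to the routine combinatorial observation that every internal node of a rooted binary tree is the most recent common ancestor of some pair of its leaves.
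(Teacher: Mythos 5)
Your proof is correct and follows essentially the same route as the paper: the key step in both is that every internal node $v$ is the most recent common ancestor of some pair of leaves, so the exponent vector of $q_{\ones_{i_v,j_v}}$ supplies the direction $\bfe_{b_v}$, forcing the parameterizing matrix to have full rank. You are somewhat more careful than the paper about the bookkeeping for the homogenizing variable $b_0$ (the paper simply asserts that $\bfe_v$ is a column of $A_T$ and that its rank is $n-1$), but this is a refinement of the same argument, not a different one.
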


\begin{proof}
The singleton containing any internal node $v$ is the top-set of $\ones_{i,j}$ where $i$ and $j$ are any two leaves with $\lca(i,j) = v$. Thus $\bfe_v$ is a column of the matrix $A_T$ that  parametrizes $\cm_T$. So $A_T$ has rank $n-1$, as needed.
\end{proof}

We are now able to prove the main result of this note.

\begin{proof}[Proof of Theorem \ref{thm:Main}]
Let $T^1$ and $T^2$ be distinct rooted, binary phylogenetic trees. For any leaves $a,b \in [n]$, denote by $\lca^i(a,b)$ the most recent common ancestor of $a$ and $b$ in tree $T^i$. For any $\bfx \in \zz^n_{2,\mathrm{even}}$, let $\top^i(\bfx)$ denote its top-set in $T^i$. Since $T^1 \neq T^2$, there exist three leaves $i,j,k \in [n]$ such that the rooted triples $T^1|_{i,j,k}$ and $T^2|_{i,j,k}$ are not equal \cite[Theorem 3]{steel1992}. 

Without loss of generality, suppose that $T^1|_{i,j,k}$ is the rooted triple with cherry ${i,j}$ and that $T^2|_{i,j,k}$ is the rooted triple with cherry ${i,k}$. Then we have $\lca^1(i,k) = \lca^1(j,k)$, but $\lca^2(i,k) \neq \lca^2(j,k)$. Hence, $\top^1(\ones_{i,k}) = \top^1(\ones_{j,k})$, but $\top^2(\ones_{i,k}) \neq \top^2(\ones_{j,k})$. By Proposition \ref{prop:LinearForms}, this implies that the linear binomial $q_{\ones_{i,k}} - q_{\ones_{j,k}}$ belongs to $\ii(\cm_1)$ but not to $\ii(\cm_2)$. By Proposition \ref{prop:Dimension}, $\dim(\ii(\cm_1)) = \dim(\ii(\cm_2))$. Thus by Proposition \ref{prop:idealID}, $\dim(M_1 \cap M_2) < \min(\dim(M_1), \dim(M_2))$, as needed. 
\end{proof}

\bibliography{references.bib}{}
\bibliographystyle{plain}

\end{document}